\newcommand{\etal}{et al.\ }
\newcommand{\sodass}{\,:\,}
\newcommand{\setGilt}[2]{\left\{ #1\sodass #2\right\}}
\begin{document}
\title{Improved Fast Similarity Search in Dictionaries
\thanks{Partially supported by DFG Grant 933/5-1. A short version of the paper was presented at the 17th Symposium on String Processing and Information Retrieval (Spire 2010). The authors would like to thank Johannes Fischer for the fruitful discussions and the anonymous referees for the detailed reviews.}}
\author{Daniel Karch \and Dennis Luxen \and Peter Sanders}
\institute{Karlsruhe Institute of Technology\linebreak
\email{danielkarch@gmail.com, \{luxen, sanders\}@kit.edu }}
\selectlanguage{english}
\maketitle
\begin{abstract}
We engineer an algorithm to solve the approximate dictionary matching problem.
Given a list of words $\mathcal{W}$, maximum distance $d$ fixed at preprocessing time and a query word $q$, we would like to retrieve all words from $\mathcal{W}$ that can be transformed into $q$ with $d$ or less edit operations.
We present data structures that support fault tolerant queries by generating an index.
On top of that, we present a generalization of the method that eases memory consumption and preprocessing time significantly.
At the same time, running times of queries are virtually unaffected.
We are able to match in lists of hundreds of thousands of words and beyond within microseconds for reasonable distances.
\end{abstract}
\pagestyle{plain}
\section{Introduction and Previous Results}
The problem of searching approximate of matches in a dictionary arises in many fields.
Most common is the search for the so called best match. 
The problem has many applications.
For example, Google's 'Did you mean' feature catches typos in search queries. 
But in some settings, the uncertainty is higher and therefore one is not interested in the best match, but also in other matches which are still within a certain distance from the query. 
An interesting application is a geocoding application that maps perhaps misspelled locations descriptions to geocoordinates.

Each word is represented by a string of characters over a finite alphabet $\Sigma$. 
The Levenshtein distance \cite{levelshtein-66-binary} $ed(a,b)$ defines a metric between two words $a,b \in \Sigma^*$ and is used in this work to compute the distance between two words.

The most trivial algorithm to solve the problem is scanning sequentially through the input list and noting the best match(es) at each entry. 
The running time is obvious and consists of a linear number of distance computations and searching an entire directory on a standard desktop computer takes only a few seconds even for dictionaries up to a few hundred thousand or even a million entries. 
But in many settings this is too much, because queries arrive in a high frequency.
For example, a web search engine only has a few milliseconds to process a single request and does not have the time to do exhaustive searching in a large dictionary.

Since these distance computations are rather expensive, it is natural to find an algorithm that does not compare the input to the entire dictionary, but only a few entries.
A so-called \textit{filter} represents a criterion to quickly discard large portions of the search space. 

The exploitation of the underlying metric space implied by the edit distance \cite{BaezaYates1998} is easy. 
The set of words is partitioned by the distance of each element to a more or less carefully chosen and perhaps random pivot element. 
By computing the distance to the pivot, the search space is pruned using the triangle inequality. 
However, this approach has limited effect, e.g. in natural language dictionaries.
Distances of most dictionary elements to the pivot lie in a small range and pruning has a limited effect.

To cope with the limitations, different schemes were introduced from using multiple pivots to tree-like data structures. 
The oldest of such trees is the BK-tree data structure proposed by Burkhard and Keller \cite{362025}, which is built recursively.
A root is selected whose subtrees are identified by distance values to the root. 
The $i$-th subtree consists of elements of the dictionary at distance $i$ to the root. 
The subtrees are recursively built until the number of elements in a subtree is below some threshold. 
Again, the triangle inequality is used to branch into or cut any subtrees. 
A candidate set of possible matches is built by the union of all leaves that are reached by the tree traversal. 
A rather weak result is that BK-trees and its refinements need $O(n^\alpha)$, $0<\alpha <1$, comparisons and node traversals on average \cite{BaezaYates1998} for a dictionary of $n$ entries. 
See Ch\'{a}vez \etal's publication \cite{502808} for a survey. 

The general problem of approximately matching words can be further refined into two categories, namely matching elements from a set of words or matching arbitrary patterns in strings \cite{BaezaYates1998}. 
As usual, in high dimensional search problems there is a severe space-time trade-off. 
Cole \etal \cite{CGL04} give a solution for the dictionary matching problem using $O(n\log^dn)$ space and answer a query in $O(m\cdot\log\log n+\mathit{occ})$ for a dictionary of size $n$, query length $m$, edit distance $d$.
Here, $\mathit{occ}$ is the number of occurrences of the pattern. 
Mihov and Schulz \cite{1105590} present a sophisticated but complicated method to solve the problem with universal Levenshtein automata.
Russo \etal \cite{Russo2007} propose a compressed index that performs well for $d=1,2,3$, but needs several seconds to perform queries for larger $d$.
The best known linear space solution needs $O(m^{d-1}\log n\log\log n + \mathit{occ})$ query time \cite{CLSTW06} for error $d\geq$ 2.
However, this solution is fairly complicated and involves large constant factors, and to our knowledge there aren't any implementations yet.
Furthermore, any of the general-purpose approximate string matching algorithms have to be adapted to perform dictionary matching:
Either the query has to be adapted to ensure that only complete words are found, or special characters have to be introduced to mark the start and end of a dictionary entry.

More practically oriented work has focused on filtering algorithms that take linear space, but these do not have strong worst case performance guarantees.
K{\"a}rkk{\"a}inen and Na \cite{DBLP:conf/alenex/KarkkainenN07,135908} report on a linear space data structure that supports substring search, but has much larger query times compared to our result.
Ukkonnen \cite{Ukkonen93approximatestring} investigated suffix trees as a building block to solve the problem. 
Likewise, Cobbs \cite{Cobbs1995} gives a data structure based on suffix trees with linear time preprocessing for a fixed size alphabet for searching fixed patterns.
Queries to the data structure can be answered in time $O(mq+occ)$, where $m$ is the length of the pattern, $q\leq n$ and again $occ$ is the number of occurrences.

A technique involving so called $q$-grams is popular among practitioners. 
But it generally works for the Hamming distance only.
$q$-grams are sub-words of length $q$ and the $q$-gram distance (or similarity) is defined by the number of $q$-grams two words share. 
A generalization of this technique are gapped $q$-grams. 
Taking $q$ letters from a word as before and introducing \textit{don't care} defines a pattern instead of sub-word. 
These don't care positions are then called gaps. 
In \cite{DBLP:conf/cpm/BurkhardtK02} it is shown that one-gapped $q$-grams can be extended to obey the edit distance metric. 
One of the major difficulties of gapped $q$-grams is the computation of a threshold which is the smallest number of matching $q$-grams between a pattern and a text. 
Most experimental work focuses on finding this threshold, e.g. \cite{672762,Burkhardt01betterfiltering}.

For more information on approximate string matching see \cite{DBLP:conf/alenex/KarkkainenN07,Maass2007662,1119494,1183723}.

To speed up edit distance computation itself, research focused on simple and practical bit-vector algorithms \cite{citeulike:556192}.
Words of character length $n$ with $d$ or fewer differences can be matched in $O(nmd/w)$, where $w$ is the word size of the machine an $m$ the length of a query. 
This is done by computing the bit representation of the current state-set of the $k$-difference automaton. 
The running time was further improved to $(nm/w)$ \cite{316550} and further refinements \cite{316550} yield an $O(dn/w)$ expected-time algorithm for arbitrary large $m$.

The remaining parts of this paper are structured as follows. 
Section \ref{sec:apxmatching} gives an introduction into the neighborhood relation on strings that we exploit. 
It is followed by an discussion of our experimental results in Section \ref{sec:experimental}.
Finally, Section \ref{sec:conclusions} draws conclusions and identifies future work.
\section{Approximate Dictionary Matching}\label{sec:apxmatching}

Our method can be seen as an implementation of a general approach to approximate matching known as \emph{(lossless) filtering}. This can be formalized as follows: Given a set $\mathcal{S}$ of words over a finite alphabet $\Sigma$, a metric $\delta:\Sigma^*\times\Sigma^*\rightarrow\mathbb{R}_0$, and an error threshold $d$, 
a preprocessing algorithm produces a data structure that allows fast evaluation of a function $F:\Sigma^*\rightarrow \mathcal{P(S)}$. For a query word $q\in\Sigma^*$, $F(q)$ computes a set of candidate words from $\mathcal{S}$
such that the set of approximate matches $\setGilt{s\in\mathcal{S}}{\delta(q,s)\leq d}$ is a subset of $F(q)$.

\paragraph{Deletion Neighborhood.}

We improve a filtering technique called \textit{Fast Similarity Search (FastSS)} \cite{BoHuSt07} which is a generalization of a single error
method proposed by Mor and Fraenkel \cite{358752}.

For integer $d$ and a word $w\in \Sigma^*$ the 
\emph{$d$-(deletion-)neighborhood} 
$\mathcal{N}_d(w)$ is defined as the set of all subwords of $w$ with exactly $d$ deleted positions.
Each element of $\mathcal{N}_d(w)$ is called a \textit{residual string}. 
Furthermore, a string $w$ is called originating string for residual $r$ if and only if $r\in \mathcal{N}_d(w)$. We obtain a lossless filter
for a set of words $\mathcal{S}$ by precomputing the $d$-neighborhoods
of strings in $\mathcal{S}$. As a filtering function, we obtain $F(q)=\setGilt{s\in\mathcal{S}}{\mathcal{N}_d(s)\cap\mathcal{N}_d(q)\neq\emptyset}$.

The correctness of this definition follows from the following Lemma:

\begin{lemma}
\label{obsResidual}
If two words $u, v \in \Sigma^*$ are within a distance $d$ from each other, then there exists a word $w$ which has length at 
least $\vert u\vert-d$ and consists of letters from $u$ and $v$ in their original order. 
Assume that $u$ is at least as long as $v$.
\end{lemma}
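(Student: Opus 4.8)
The plan is to read off the desired word $w$ from an optimal alignment (trace) underlying the Levenshtein distance $ed(u,v)\le d$. Concretely, I would fix an optimal edit script that transforms $u$ into $v$ with $k=ed(u,v)\le d$ operations and recall that such a script induces a partition of the positions of $u$ into \emph{matched} positions (aligned with an equal letter of $v$), \emph{substituted} positions (aligned with a different letter of $v$), and \emph{deleted} positions, together with a set of positions \emph{inserted} on the $v$ side. Writing $\mu,\sigma,\delta_u,\iota$ for the respective numbers, one has the two identities $|u|=\mu+\sigma+\delta_u$ and $k=\sigma+\delta_u+\iota$.

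Next I would take $w$ to be the string spelled out by the matched positions, in the order in which they occur in $u$. Because each such position carries a letter that appears at the aligned place in $v$, and because an alignment respects the left-to-right order on both strings, $w$ occurs as a subsequence of both $u$ and $v$; that is, $w$ consists of letters of $u$ and of $v$ in their original order, and $|w|=\mu$.

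It then only remains to estimate $\mu$. From the two identities, $\mu=|u|-\sigma-\delta_u\ge |u|-(\sigma+\delta_u+\iota)=|u|-k\ge |u|-d$, which is the asserted bound. The hypothesis that $u$ is at least as long as $v$ serves only to state the bound for the longer string; the same argument also yields $|w|\ge|v|-d$, and if $|u|-d\le 0$ the claim is trivially witnessed by the empty word.

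The only point that needs care is the bookkeeping behind the two counting identities: one must work with a \emph{normalized} optimal script in which every elementary operation is of one of the four listed kinds and no position of $u$ or of $v$ is modified twice, so that the positions of $u$ (resp.\ $v$) really are partitioned as claimed. This is precisely the standard equivalence between edit scripts and alignments; if one prefers a self-contained argument, the statement can instead be proved by induction on $d$, the case $d=0$ being $w=u=v$, and the step applying one operation of an optimal script to obtain $u'$ with $ed(u',v)=ed(u,v)-1$ and then transferring a common subsequence of $u'$ and $v$ back to one of $u$ and $v$ at the cost of at most one letter.
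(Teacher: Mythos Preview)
Your proof is correct and follows essentially the same approach as the paper's: both read off a common subsequence of $u$ and $v$ from an optimal edit script and bound its length from below by $|u|-d$. The paper phrases this via \emph{ordered edit sequences}, observing that at most $d$ positions are touched and deleting them; you use the alignment/trace formalism with the explicit counting identities $|u|=\mu+\sigma+\delta_u$ and $k=\sigma+\delta_u+\iota$, which makes the bookkeeping more transparent, but the underlying idea is the same.
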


We use the concept of \textit{Ordered Edit Sequences} \cite{Maass2007662} to show the claim. Our proof is simpler and more intuitive than the proof from \cite{BoHuSt07}.
\begin{proof}
Recall that the edit distance is said to be the minimal number of edit operations to transform one word $u\in\Sigma^*$ into another $v\in\Sigma^*$. 
The set of operations available for any single transformation are $op=\mathsf{\{ins, del, chg\}} : \Sigma \cup\{\epsilon\}\rightarrow\Sigma\cup\{\epsilon\}$ with $v=op_d(op_{d-1}(\ldots(op_1(u))\ldots))$. 
The sequence $\rho(u,v)=
(op_1,op_2,\ldots,op_d)$ is called edit sequence and we call it ordered if the
operations are applied from left to right. 
We define $pos(\cdot)$ to give the position of an operation within the edit sequence.
In  other words $\forall i:\left(pos(op_i)\leq pos(op_{i+1})\right)$. 
By definition of the edit distance metric there exists an edit sequence of minimal length.
Now, we can show Lemma~\ref{obsResidual}. 
Since $ed(u,v) \leq d$ it follows that the length of a minimal ordered edit  sequence is at most $d$, which means $\vert \rho_{min}(u,v)\vert\leq d$ is the length of a minimal edit sequence.
This implies that $v$ is changed at no more than $d$ positions. 
By deleting these at most $d$ positions from $v$, we get a string $w$, which has length at least $\vert u \vert - d$ and preserves the letter ordering from $u$ and $v$.
\hfill$\square$
\end{proof}

\paragraph{Basic Data Structure.}
A static index data structure is generated in a precomputation phase that can be queried during an on-line phase.
We insert a number of values into a hash table that is part of our data structure.
The structure utilizes the hash table to store pointers to originating dictionary entries at the hash values of residual strings.
If any hash value has more than one originating dictionary entry then the corresponding pointers are stored in a list. 
Figure \ref{fig:datastructure} sketches the internal structure of the index.
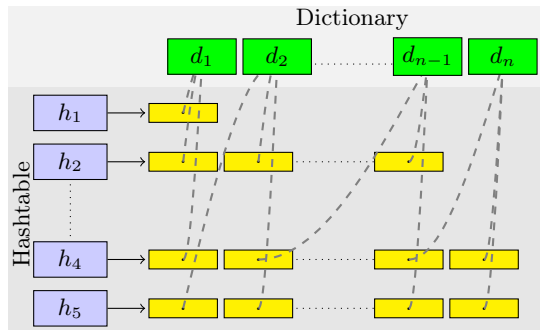
\begin{figure}
\begin{center}
\begin{tikzpicture}

\tikzstyle{block1} = [draw,fill=blue!20,minimum width=3em]
\tikzstyle{block2} = [draw,fill=green,minimum width=2.8em]
\tikzstyle{block3} = [draw,fill=yellow,minimum width=2.8em]
\tikzstyle{pointer}= [->,thick,gray,dashed,step=1pt]

\node[minimum height=3.5em,minimum width=22em,fill=gray!10,overlay] at (4.75,-0.8) {};
\node[minimum height=10.1em,minimum width=22em,fill=gray!20,overlay] at (4.75,-2.95) {};

\draw[dotted] (2,-4*0.65) -- (2,-5.5*0.65);
\draw[dotted] (5,-1) -- (7,-1);
\draw[dotted] (4.5,-2*0.65-1) -- (6.5,-2*0.65-1);
\draw[dotted] (4.5,-4*0.65-1) -- (6.5,-4*0.65-1);
\draw[dotted] (4.5,-5*0.65-1) -- (6.5,-5*0.65-1);
\node[rotate=90] at (1.35,-3) {Hashtable};
\node at (5.75,-0.4) {Dictionary};

\foreach \x/\y in {1/a,2/b,4/n-1,5/n} {
 \ifthenelse{\x > 3}
   { \node[block1] at (2,-\x*0.65-1) (block\x) {$h_\x$}; }
   { \node[block1] at (2,-\x*0.65-1) (block\x) {$h_\x$}; }
 \draw[->] (block\x.east) -- +(0.5,0);
 \node[block3] at (\x+2.5, -4*0.65-1) {.};
 \node[block3] at (\x+2.5, -5*0.65-1) {.};
 \ifthenelse{\x < 2}
 {\node[block3] at (\x+2.5, -1*0.65-1) {.}; } {}
 \ifthenelse{\x < 5}
 {\node[block3] at (\x+2.5, -2*0.65-1) {.}; } {}
}

\draw[pointer] (1+2.5, -1*0.65-1) sin (1+3-0.25, -1);

\draw[pointer] (1+2.5, -2*0.65-1) sin (1+3-0.25, -1);
\draw[pointer] (2+2.5, -2*0.65-1) sin (2+3-0.25, -1);
\draw[pointer] (4+2.5, -2*0.65-1) cos (4+3-0.25, -1);

\draw[pointer] (1+2.5, -4*0.65-1) cos (1+3-0.25, -1);
\draw[pointer] (2+2.5, -4*0.65-1) cos (4+3-0.25, -1);
\draw[pointer] (4+2.5, -4*0.65-1) cos (5+3-0.25, -1);
\draw[pointer] (5+2.5, -4*0.65-1) cos (5+3-0.25, -1);

\draw[pointer] (1+2.5, -5*0.65-1) sin (2+3-0.25, -1);
\draw[pointer] (2+2.5, -5*0.65-1) cos (2+3-0.25, -1);
\draw[pointer] (4+2.5, -5*0.65-1) cos (4+3-0.25, -1);
\draw[pointer] (5+2.5, -5*0.65-1) cos (5+3-0.25, -1);

\foreach \x/\y in {1/a,2/b,4/n-1,5/n} {
 \ifthenelse{\x > 3}
  {\node[block2] at (\x+2.75, -0.9) (dict\x) {$d_{\y}$};}
  {\node[block2] at (\x+2.75, -0.9) (dict\x) {$d_\x$};}
}

\end{tikzpicture}\end{center}
\caption{Approximate string matching data structure.}
\label{fig:datastructure}
\vspace{-2em}
\end{figure}\paragraph{Query.}
For an input query $q$ and maximum distance $d$, the corresponding $d$-neighborhood and its hash values are computed. 
If any element of the query's residuals is also an element of the data structure then the pointers to the originating dictionary entries give a set of candidates.
Each of those might be an approximate match.
Once the candidate set is completely built, it is searched exhaustively by computing the edit distance of each candidate to the query.
By removing all elements from the candidate set whose distance is larger than the threshold $d$ we get the set of all dictionary members that are at most a distance $d$ away from query $q$. 
Perhaps there exists an additional order on the candidates stemming from the application.
The algorithm can be adapted to not only return the best match, but also a list of those candidates that are sufficiently close.

\paragraph{Precomputation.}
We compute the $d$-neighborhood of each element of the input dictionary and insert the resulting information into our index data structure.
Doing this precomputation naively and storing all residual strings in a data structure takes up an enormous amount of space. 
Instead, we use hashing and reduce each element of the residual neighborhood into an integer number. 
We insert pointers to the originating dictionary entries into the hash table at the respective hash values of all residual strings.
Therefore, only constant space is needed per residual string regardless of the length of that string.
We now present an improvement to the algorithm.

\paragraph{Algorithmic Generalization.}
We limit the number of elements that are inserted into the index while staying lossless.
To do so, we split long input words in half, compute the residual strings with half the number of errors, and adapt the query algorithm, which will be explained in this Section.
See Section \ref{sec:experimental} for an analysis of the threshold value $m$, which indicates whether or not to split a word.
Instead of generating $\vert s \vert \choose d$ hash values we insert only $$ {\vert s \vert \choose \lfloor \frac{d}{2} \rfloor } + {\vert s \vert \choose \lceil \frac{d}{2} \rceil }$$ values for a split dictionary entry $s$. 
The \textit{generalized $d$-neighborhood} of $w'\in\Sigma^*$ is the set of residuals that is found by computing all combinations of $\lceil \frac{d}{2} \rceil$ deleted characters for the first and second half of $w'$.

The generation of the index is simple.
But we have to pay some extra care at query time, because insertions and deletions that transform words $w$ into $w'$ can take place at arbitrary positions.
As a consequence, we can not rely on the length of a query $q$ to decide whether it has been split or not.
Instead of splitting a query $q$ of length $l$ at a fixed position, it is split several times in half at positions in the interval of $\lceil\frac{l}{2}\rceil \pm \lceil \frac{d}{2} \rceil$. 
Also, the allowed error is halved.
If the length of an input word is within $m\pm d$ then the index is also searched for the non-split string.

Consider these definitions.
Let $ w\in \Sigma^*$ be an entry of dictionary $D$ and $d$ the maximum allowed error. 
Let $u=p(w)$ and $v=s(w)$ denote the first and second half of the split word $w$. 
Prefixes $u$ and suffix $v$ are indexed, while $q$ is the query.
Any query $q$ is split at several positions as explained above and we define $\mathcal{P}(w)$ to be the set of first and $\mathcal{S}(w)$ to be the set of second halves.
Our method is still correct since we can show the existence of a common residual string for either the prefix or the suffix of a split query word by the following Lemma.
\begin{lemma}\label{lemma:halved}
Let $q \in \Sigma^*, w=uv$ with $ed(w,q) \leq d$. 
Consider $\mathcal{P}(q)$ $\left(\mathcal{S}(q)\right)$ to be the set of $\lceil\frac{d}{2}\rceil$ many prefixes (suffixes) of $q$ that are generated for each query to the index.
Then there exists at least one pair $(p',s')$ with $p'\in \mathcal{P}(q)$, $s'\in \mathcal{S}(q)$ and $p'\circ s'=q$ of prefix-suffix-elements for which either $ed\left(u,p'\right) \leq \lceil d/2 \rceil$ or $ed\left(v,s'\right) \leq \lceil d/2\rceil$.
It suffices to test the split positions from the interval $\lceil\frac{\vert q\vert}{2}\rceil\pm\lceil\frac{\vert d \vert}{2}\rceil$ to find that pair.
\end{lemma}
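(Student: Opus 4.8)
The plan is to produce the required prefix--suffix pair by cutting $q$ at the point where an optimal transformation of $w$ into $q$ passes the boundary between $u$ and $v$. First I would fix an optimal alignment of $w=uv$ and $q$ -- equivalently a minimal ordered edit sequence of length $k=ed(w,q)\le d$, as in the proof of Lemma~\ref{obsResidual} -- and look at the position $c$ in $q$ that is opposite the boundary between the $u$-block and the $v$-block of $w$. This splits $q$ into $p'=q[1..c]$ and $s'=q[c+1..|q|]$ with $p'\circ s'=q$. Restricting the alignment to the two blocks yields an alignment of $u$ with $p'$ of some cost $k_1$ and an alignment of $v$ with $s'$ of some cost $k_2$, and since the two edge sets partition the alignment we get $k_1+k_2=k\le d$. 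Hence $ed(u,p')\le k_1$, $ed(v,s')\le k_2$, and $\min(k_1,k_2)\le\lfloor d/2\rfloor\le\lceil d/2\rceil$, so one of $ed(u,p')\le\lceil d/2\rceil$ or $ed(v,s')\le\lceil d/2\rceil$ already holds. What then remains is to show that this particular cut is among those tested, i.e.\ that $c$ lies in $\lceil|q|/2\rceil\pm\lceil d/2\rceil$, so that $p'\in\mathcal{P}(q)$ and $s'\in\mathcal{S}(q)$.

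For that position bound I would simply count the insertions and deletions the alignment performs on each side. Writing $i_1,e_1$ for the numbers of insertions and deletions on the $u$-side and $i_2,e_2$ for those on the $v$-side, one has $c=|u|+i_1-e_1$ and $|q|=|w|+(i_1+i_2)-(e_1+e_2)$, while $(i_1+e_1)+(i_2+e_2)\le k_1+k_2\le d$. Using that the dictionary word is split at $|u|=\lceil|w|/2\rceil$, so that $2|u|-|w|\in\{0,1\}$, these relations combine to
\[
2c-|q| \;=\; \bigl(2|u|-|w|\bigr)+(i_1+e_2)-(i_2+e_1),
\]
whose right-hand side lies between $-d$ and $d+1$ because $i_1+e_2\ge 0$, $i_2+e_1\ge 0$ and their sum is at most $d$. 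Subtracting $2\lceil|q|/2\rceil-|q|\in\{0,1\}$ shows that $2\bigl(c-\lceil|q|/2\rceil\bigr)$ is an even integer of absolute value at most $d+1$, hence $\bigl|c-\lceil|q|/2\rceil\bigr|\le\lceil d/2\rceil$. This places $(p',s')$ among the tested split pairs and completes the argument; combined with Lemma~\ref{obsResidual} applied to whichever half is the ``good'' one, it is exactly what makes the generalized filter lossless.

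I expect the structural half -- reading the pair off the boundary crossing -- to be routine; the delicate point is the position estimate, namely controlling how far the $u$/$v$ boundary can drift inside $q$ under up to $d$ edit operations and, in particular, absorbing the leftover additive $\tfrac12$ that comes from odd lengths and the ceilings into the bound $\lceil d/2\rceil$ via the parity argument above. A secondary point that also needs a word of care is the treatment of insertions lying exactly on the $u$/$v$ boundary: they may be charged to either side, which merely slides $c$ by one, and since the displayed identity and the cost bound depend only on the sums $i_1+e_2$, $i_2+e_1$ and $(i_1+e_1)+(i_2+e_2)$, every such choice keeps both the edit-distance bound and the position bound intact, so it is enough to fix one.
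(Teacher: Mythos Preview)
Your argument is correct and follows the same overall strategy as the paper: fix a minimal edit sequence, split $q$ at the position induced by the $u$/$v$ boundary, apply pigeonhole to get one half within $\lceil d/2\rceil$, and then verify that this split position lies in the tested interval. The only real difference is in how the position bound is established: the paper argues by contradiction (if $|p'|$ fell outside the interval then $|s'|-|p'|>2\lceil d/2\rceil$, forcing too many edit operations since $|p|$ and $|s|$ differ by at most one), whereas you count insertions and deletions on each side directly and push the resulting identity through a parity argument---which is slightly more explicit about the rounding terms than the paper's version.
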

\begin{proof}
Consider the edit sequence $S$ that transforms $w$ into $q$ and that has length at most $d$, s.t. $ed(w,q)\leq d$.
String $w$ is split at position $\lceil\frac{\vert w\vert}{2}\rceil$ into $w=p\circ s$.
Note that the lengths of $p$ and $s$ differ at most $1$.
Sequence $S$ is applied to $w=p\circ s$ and yields $q=p'\circ s'$.
Hence, either $ed(p,p')\leq \lceil\frac{d}{2}\rceil$ or $ed(s,s')\leq \lceil\frac{d}{2}\rceil$ or both.
The algorithm has to split query $q$ exactly into $p'$ and $s'$ to guarentee that a match is found.
Assume that it doesn't suffice to test the interval $\lceil\frac{\vert q\vert}{2}\rceil\pm\lceil\frac{\vert d \vert}{2}\rceil$ to find the correct splitting position.
Then $p'$ is either shorter than $\lceil\frac{m}{2}\rceil-\lceil\frac{d}{2}\rceil$ or longer than $\lceil\frac{m}{2}\rceil+\lceil\frac{d}{2}\rceil$.
Assume $\vert p'\vert < \lceil\frac{m}{2}\rceil-\lceil\frac{d}{2}\rceil$. Then

\noindent $\Rightarrow\vert s'\vert>\lceil\frac{m}{2}\rceil+\lceil\frac{d}{2}\rceil$\\
$\Rightarrow \vert p' \vert + \lceil\frac{d}{2}\rceil<\frac{m}{2}<\vert s'\vert - \lceil\frac{d}{2}\rceil$\\
$\Leftrightarrow\vert p' \vert + \lceil\frac{d}{2}\rceil<\vert s'\vert - \frac{d}{2}$
$\Leftrightarrow\vert p'\vert < \vert s'\vert -2\cdot\lceil\frac{d}{2}\rceil$
$\Leftrightarrow\vert p' \vert - \vert s'\vert < -2\cdot\lceil\frac{d}{2}\rceil$\\
$\Leftrightarrow\vert s'\vert-\vert p'\vert> 2\cdot\lceil\frac{d}{2}\rceil$

\noindent This implies that the lengths of $s'$ and $p'$ differ by more than $2\cdot\lceil\frac{d}{2}\rceil$.
But then edit sequence $S$ has to be longer than $2\cdot\lceil\frac{d}{2}\rceil$ operations, because length difference is a lower bound for edit distance.
The other case for $\vert p'\vert > \lceil\frac{m}{2}\rceil+\lceil\frac{d}{2}\rceil$ follows by the same line of argumentation.\hfill\qed
\end{proof}
Wu and Manber \cite{135244} use partitioning into $d+1$ pieces to match one of the pieces with no error, while Navarro and Baeza-Yates \cite{Navarro98improvingan} gave a recursive partitioning scheme for fast on-line approximate string matching.

See Section \ref{sec:experimental} for an experimental analysis of the generalization that shows it uses half the space than our implementation of the original algorithm and maintains stable query performance.

\section{Analysis}\label{chap:analysis}
Our variant makes heavy use of hashing as we argued before and we analyze the penalty of our approach coming from hash collisions.
First, consider the case that we do not split the input string, which resembles the original method.

For each dictionary entry of length $\ell$, we insert at most $\ell\choose d$
constant size entries into the hash table. The hash table needs $O(1)$ space per
element since the bit size of each entry is of constant size. 
Note that for $d=1$ we obtain overall linear space
because $O(\ell)$ constant size hash table entries are stored for
a dictionary entry of size $\ell$.  

We resort to average case analysis for the query time using the following
model: 
Consider a dictionary of $n$ words drawn uniformly at random from 
$\Sigma^{\ell}$ and an arbitrary query word $q$ of length $\ell$.
In real world inputs, we have a mix of words with different lengths.
However, a query of length $\ell$ will mostly return candidates of length $\ell$
for random inputs. Hence, there is no need to postulate anything on the
distribution of lengths -- we just analyze the system for each length
separately. 

Assume an order in which the residuals of a word can be generated.
Consider the 0/1 random variable $X_{ijk}$ that has value one iff
the $i$-th residual of query $q$ is equal to the $j$-th residual of the input word
$k$. The total number of residuals that need 
to be considered is
bounded by 
$$X:=\sum_{i=1}^{\ell\choose d}\sum_{j=1}^{\ell\choose
d}\sum_{k=1}^{n}X_{ijk}.$$
This is an overestimation of the actual number of residuals to be
considered since by deleting different sets of 
characters we might arrive 
at the same residual. However, for not too small $\Sigma$ this only happens rarely
Let $\sigma$ denote the size of the alphabet actually used. 
We have $P[X_{ijk}=1]=1/\sigma^{\ell-d}=
\sigma^{d-\ell}$. Hence, using the linearity of expectation, we get an expected
value of 
\begin{equation}\label{eq:time}
\text{\bf E}[X]=n{\ell\choose d}^2\sigma^{d-\ell}
\end{equation} 
This gives the number of residuals we have to consider.
The number of actual distance computations may be smaller since several residuals of $q$ may match several residuals of a dictionary entry $s_k$, but we will compute the distance $d(s_k,q)$ only once.

An interesting consequence of (\ref{eq:time}) is that, on average, we can expect a speedup over the naive algorithm that is independent of the size of the input dictionary. 
By applying the Markov inequality, we can estimate an upper bound of the probability that the expected number is not a fraction of $n$. 
Let $c$ be a constant $>0$.
\begin{equation}
\label{eqn:markoff}
P\left[X\geq \frac{n}{c}\right]\leq c^{-1}{\ell\choose d}^2\sigma^{d-\ell}\text{~.}
\end{equation} 
See Section \ref{sec:experimental}, where we experimentally analyze the behavior of the algorithm for varying splitting parameters.

\section{Experimental Results}\label{sec:experimental}
\paragraph{Implementation Details.}
We implemented the data structure, the construction and query algorithms in C++ using GCC Compiler version 4.3.2.
We hashed all residual strings with the built-in hash function of the Boost library v$1.36$ to a $32$-Bit Integer and chained with a simple linear congruence.

The exhaustive search of the candidate set is done by a simple implementation of the Levenshtein distance. 
It computes a band of width $2d+1$ only. 
This way we compute the distance exactly only if it is smaller than $d$ and return otherwise as soon as we get a certificate that the distance is larger than $d$. 
Since we need $O(1)$ to fill a cell in the distance table, we can verify a candidate in $O(d\cdot l)$, where $l$ is the length of the shorter word. 
In the experiments it took less than a microsecond to verify any single candidate.
\paragraph{Environment.}
All of our tests were conducted on a single core of a Intel Xeon X5550 CPU, running a version 2.6.27 Linux kernel.
We compare the performance of our optimizations against our own implementation only for reasons of fairness.
\paragraph{Test Instances.}
The sizes of the dictionaries used in the experiments range between about $38\,000$ and $1.8$ million entries (see Table~\ref{tab:size}). 
All results were averaged over a number of queries of perturbed dictionary entries.
\begin{table}[t]
\begin{center}
\begin{tabular}{l|r|r|r}
\label{tab:size}
 dictionary & no. elements & avg. length & size [MiB]\\ 
\hline mobydick  &     37\,924 &  9 &  0.31\\
\hline town      &     47\,339 & 10 &  0.49\\ 
\hline english   &    213\,557 & 10 &  2.20\\ 
\hline wikipedia & 1\,812\,365 &  9 & 17.06\\
\end{tabular}
\end{center}
\caption{Basic information on our dictionaries.}
\end{table}
The word list \textit{mobydick} consists of the distinct words from Melvilles
classic novel, the \textit{town} dictionary consists of German 
town names extracted from the OpenStreetMap
project\footnote{\url{http://www.openstreetmap.org/}} in February 2009, the \textit{english} 
dictionary is an extract of words from Webster's English Dictionary and the
\textit{wikipedia} dictionary is the list of 
pairwise distinct words from all english Wikipedia\footnote{\url{http://www.wikipedia.org}} titles as of February 2009.
Table \ref{tab:size} lists element count and average word length of each test data set.

\subsection{Splitting Parameter}
\paragraph{Preprocessing Space.}
We analyze the amount of distinct residuals that are generated for each value of $m\in 1,\ldots,30$ and the average duration of a single query against this index. 
To do so, we averaged over $1\,000$ randomized queries.
Both value $m=1$ and $m=30$ resemble worst cases. 
We present the results in the plots of Figure \ref{fig:numOfResids} for edit distance $3$.
Other distances show similar behavior. 
Note that we omitted the lower and upper values of $m$ for clearer arrangement, because for the values $1,\ldots,5$ ($20,\ldots,30$) nearly all (none) strings get split.
We present selected plots that show the experiments.
Note the logarithmic scales for query times.
\begin{figure}[t]
\center\scalebox{0.48}{\rotatebox{270}{\includegraphics{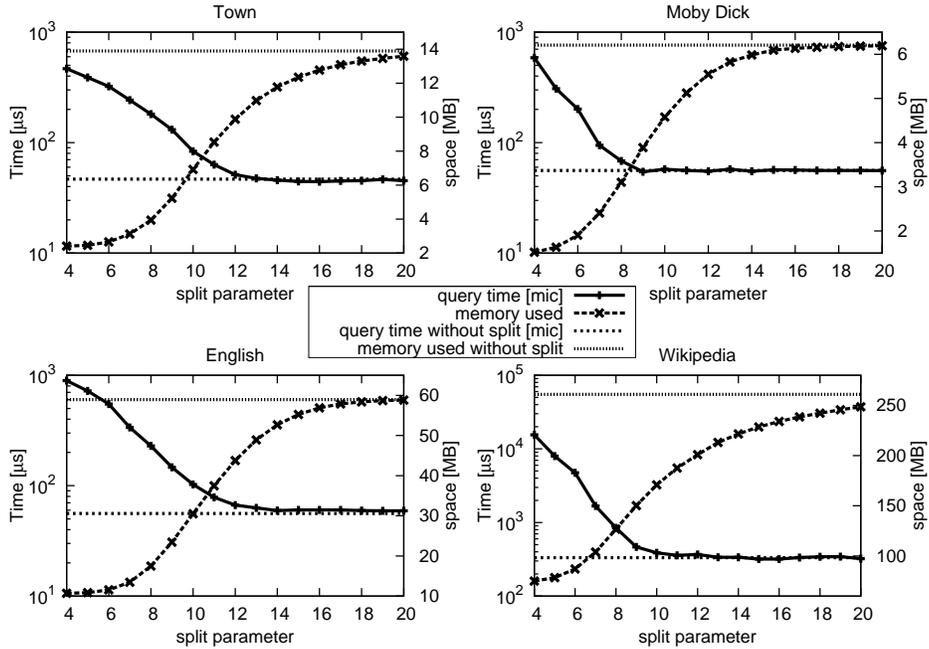}}}
\caption{Analysis of the Splitting Parameter for $d=2$}\label{fig:numOfResids}
\end{figure}
In all the experiments we see that there is a trade-off between the memory consumption and average query time.
The split parameter functions as an adjusting value to choose between size of the index and query performance. 
Our analysis shows that the index size can be halved by degrading the speed of an average query within acceptable limit only.
Especially, when splitting is restricted to those dictionary entries whose length is larger than the average, we can halve the memory consumption of the index.
The query performance is virtually unaffected.
\paragraph{Preprocessing Time.} We investigated preprocessing times with and without splitting parameter set. 
The preprocessing was run for values $d=0,\ldots,4$ on all of our data sets. Figure \ref{fig:preproc} reports on the numbers.

\begin{figure}[t]
\center\scalebox{0.48}{\rotatebox{270}{\includegraphics{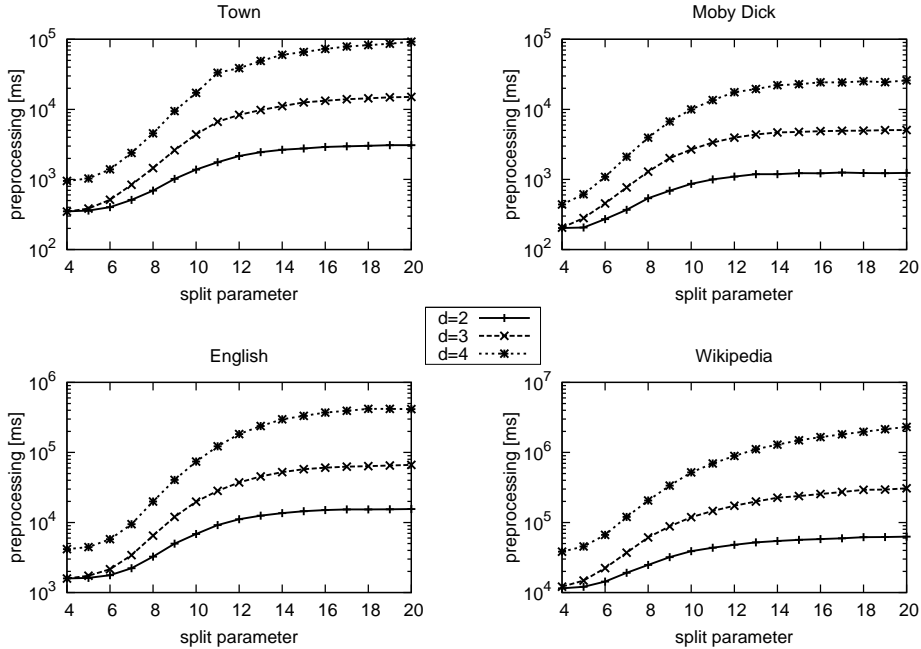}}}
\caption{Analysis of the preprocessing in relation to the splitting parameter.}\label{fig:preproc}
\end{figure}

The preprocessing is roughly ten times faster for reasonable values of the splitting parameter than without any splitting.
Mainly this is because we do not store any additional information besides pointers to dictionary entries.

\paragraph{Query Performance.}
We conducted experiments on each list for maximum distances of $d=\{0,\ldots,4\}$ to test the query performance for varying number of allowed errors. 
For natural language dictionaries a distance of $d=3$ is already large and larger distances deliver matches that already look arbitrary.
During each query we generated the candidate set, verified each member of the set and reported a best match found. 
Each test run picked $1\,000$ elements from the dictionary and introduced up to $d$ errors at random. 
The splitting parameter is set to $m = 10$.
The query times and search space sizes are averaged. 
Tables \ref{tab:proc} and \ref{tab:nums} report on these experiments.
\begin{table}[t]
\begin{center}
\begin{tabular}{c|r|r||r|r||r|r||r|r}
 & \multicolumn{2}{c||}{Mobydick} & \multicolumn{2}{c||}{Town} &
\multicolumn{2}{c||}{English} & \multicolumn{2}{c}{Wikipedia}\\
     $d$ & mem   & proc   & mem   & proc   & mem    & proc    & mem     & proc    \\ 
\hline 0 &  0.25 &  0.061 &  0.46 &  0.156 &   2.36 &   0.886 &   14.41 &   7.131 \\
\hline 1 &  1.33 &  0.320 &  1.79 &  0.576 &   8.55 &   3.450 &   55.84 &  32.287 \\
\hline 2 &  4.57 &  1.272 &  6.91 &  2.483 &  30.49 &  12.596 &  170.79 & 107.289 \\ 
\hline 3 &  9.78 &  4.044 & 15.18 &  7.458 &  61.37 &  36.309 &  342.18 & 270.506 \\ 
\hline 4 & 16.09 & 14.647 & 27.20 & 28.144 & 105.75 & 117.970 &  603.35 & 922.521 \\ 
\end{tabular}
\end{center}
\caption{\textbf{Preprocessing:} \textit{Mem} is the size of the index in [MiB], \textit{proc} the duration $[s]$.}
\label{tab:proc}
\end{table}

\begin{table}[t]
\begin{center}
\begin{tabular}{c|r|r||r|r||r|r||r|r}
 & \multicolumn{2}{c||}{Mobydick} & \multicolumn{2}{c||}{Town} & \multicolumn{2}{c||}{English} & \multicolumn{2}{c}{Wikipedia}\\
   & query     & cand  &  query    & cand  &  query    & cand & query    & cand \\ 
 $d$ & [$\mu$s]  & set   &  [$\mu$s] & set   &  [$\mu$s] & set  & [$\mu$s] & set  \\
\hline 0 &      2 &      1 &      0 &      2 &      0 &      1 &           1 &            1\\
\hline 1 &      5 &      5 &      8 &      9 &      8 &      6 &          34 &           25\\
\hline 2 &     84 &     61 &     99 &     99 &    122 &     46 &         502 &          702\\ 
\hline 3 &    553 &    606 &    644 &    613 &    644 &    502 &      7\,019 &       9\,900\\ 
\hline 4 & 2\,974 & 3\,376 & 7\,250 & 3\,720 & 7\,250 & 4\,520 & 55$\cdot 10^3$ & 65$\cdot 10^3$\\ 
\end{tabular} 

\end{center}
\caption{\textbf{Query:} \textit{query} is the average time for a single query in microseconds, \textit{cand set} the average cardinality of the candidate set.}
\label{tab:nums}
\end{table}

The \textit{query} column shows the time for the actual query in microseconds and \textit{cand set} is the number of elements in the candidate set on average. 
We see the expected rise in the number of candidates that have to be verified by the algorithm. 
We briefly compared the observed number of collisions against the expected number from our analysis in Section \ref{chap:analysis}. 
The observed number was always lower as the expected one since our analysis is an overestimate of the actual collision rate. 
In some cases we observed the order of a magnitude less collisions than expected.

When looking at our result and the original experiments of Bocek \etal \cite{BoHuSt07} in Table \ref{tab:compare} we see that our implementation performs better by about an order of magnitude in all important areas.
Although we know that our numbers were measured on different hardware, they give an impression on the performance.
The experiments were run on the same random dictionary of 10\,000 words.
Note that the case of $m=\infty$ corresponds to Bocek \etal's algorithm.
They proposed several improvements that either perform fast or have low space consumption but not both at the same time.
Since the results of the experiments are only available as plots we have to estimate the values.
We did so in a benevolent way and compare the best of their values in each category against our implementation with and without splitting. 
\begin{table}[!h]
\begin{center}
\begin{tabular}{r|r|r|r||r}
                      & $m=\infty$ & $m=10$          & Best of Bocek \etal  & BK-tree  \\\hline
preprocessing [ms]    & 2\,649     & \textbf{349}    & 5\,000 - 7500        & \textbf{183}\\
avg. query [$\mu$s]   & 114        & \textbf{18}     & 100--200$\cdot 10^3$ & 935\\
dictionary size [MiB] & 9.8        & \textbf{1.5}    & 20                   & \textbf{0.25}
\end{tabular}
\end{center}
\caption{Comparison Against Existing Experiments, best results bold and BK-tree for reference.}\label{tab:compare}
\end{table}
We see one potential source of performance problems with our experiments as we tested on dictionaries with rather short words that have similar sizes. 
The higher the allowed error distance $d$ is, the shorter residual strings get. 
This leads to longer indices lists in the hash table, because it is more  likely that two distinct words will have common residual strings. 
This also explains the larger number of candidates for higher values of $d$. 

An experimental evaluation of BK-trees \cite{DBLP:journals/corr/cs-DB-0403014} and several variants reports on the size of the search space that is visited depending on the allowed error distance. 
Those experiments were conducted on a set of 100\,000 English words and report on a nearly linear growth of the visited search space going up from $5\%$ for edit distance 0 to slightly more than $40\%$ for a distance of 4.
The size of the visited search space in our experiments is always less than $1\%$ and much less than the search space size for the best BK-tree variant \cite{DBLP:journals/corr/cs-DB-0403014}.
We were able to confirm the high number of candidates with our own BK-tree implementation.
Table \ref{tab:bkexp} reports on selected numbers of those experiments for the largest and smallest of the dictionaries.
\begin{table}
\begin{center}
 \begin{tabular}{c|r|r||r|r}
  & \multicolumn{2}{c||}{Mobydick} & \multicolumn{2}{c}{Wikipedia}\\
    & query     & cand  & query    & cand \\ 
  $d$ & [$\mu$s]  & set   & [$\mu$s] & set  \\
 \hline 1 &     198 &            197 &          1\,258 &          1\,184\\
 \hline 2 &  3\,586 &         4\,127 &  94$\cdot 10^3$ & 116$\cdot 10^3$\\ 
 \hline 3 &  8\,722 & 10$\cdot 10^3$ & 374$\cdot 10^3$ & 486$\cdot 10^3$\\ 
 \hline 4 & 13\,083 & 15$\cdot 10^3$ & 862$\cdot 10^3$ & 802$\cdot 10^3$\\ 
 \end{tabular} 
\end{center}
\caption{Selected numbers on the performance of BK-trees.}
\label{tab:bkexp}
\end{table}

The number of candidates in BK-trees is high even for small allowed error distances.
Thus the filtering effect of the metric space is quite low.
\section{Conclusions and future work}\label{sec:conclusions}
We improved a method for approximate string matching in a dictionary. 
We developed algorithmic optimizations that provide a tuning parameter to choose between space consumption and running time while having overall lower preprocessing duration.
Additionally, the performance has been validated experimentally by comparison against BK-trees and the baseline version of FastSS.

We see possibilities to speed up the verification of the candidate set using bit-parallelism \cite{DBLP:conf/wea/HyyroFN04} and SIMD instructions of current processors. 
This technique has been successfully used by \cite{1221301}. 
However, only about half of the time of the algorithm is actually spent in the verification phase with the computation of the edit distance.
Likewise there might be opportunities to speed up the precomputation, in particular, using fast, incremental computations of hash functions and using parallelization. 
On the other hand, it might be interesting to use data compression techniques to further reduce the storage requirements.
\bibliography{improvFastSS}
\bibliographystyle{splncs}
\end{document}